\documentclass[conference,10pt]{IEEEtran}
\usepackage[dvips]{color}
\usepackage{epsf}
\usepackage{times}
\usepackage{epsfig}
\usepackage{graphicx}

\usepackage{amsmath}
\usepackage{amssymb}
\usepackage{amsxtra}
\usepackage{amsthm}
\usepackage{bbm}

\usepackage{here}
\usepackage{rawfonts}
\usepackage{times}
\usepackage{url}
\usepackage{cite}
\usepackage{comment}
\usepackage[utf8]{inputenc}
\usepackage{caption}
\usepackage{subcaption}

\usepackage{pstricks}
\usepackage{algorithm}
\usepackage{algpseudocode}
\usepackage{lipsum}
\usepackage{mathtools}

\newtheorem{theorem}{\bf Theorem}

\newtheorem{proposition}{\bf Proposition}

\usepackage{geometry}
\geometry{
	left =1.57cm,
	right=1.57cm,
	top = 0.7 in,
	bottom = 2.5 cm,}

\makeatletter 
 
\hyphenation{op-tical net-works semi-conduc-tor}

\begin{document} 
\title{\huge Distributed Generative Adversarial Networks for mmWave Channel Modeling in Wireless UAV Networks }
\vspace{-0.1cm}
\author{ 
	\IEEEauthorblockN{Qianqian Zhang, 
		Aidin Ferdowsi, 
		and   
		Walid Saad
	}
	
	\IEEEauthorblockA{\small
		Bradley Department of Electrical and Computer Engineering, Virginia Tech, VA, USA,
		Emails: \url{{qqz93,aidin,walids}@vt.edu}. 
	}
}\vspace{-0.2cm}
\maketitle

\setlength{\columnsep}{0.55cm}

\begin{abstract}
In this paper, a novel framework is proposed to enable  air-to-ground  channel modeling  over millimeter wave (mmWave) frequencies in an unmanned aerial vehicle (UAV) wireless network. 
First, an effective channel estimation approach is developed to collect  mmWave channel information allowing each UAV to train a local channel model via a generative adversarial network (GAN).  
Next, in order to share the channel information between UAVs in a privacy-preserving manner,  a cooperative framework, based on a distributed GAN architecture, is developed to enable each UAV to learn the mmWave channel distribution from the entire dataset in a fully distributed  approach.   
The necessary and sufficient conditions for the optimal network structure that maximizes the learning rate for information sharing in the distributed network are derived.   
Simulation results show that the learning rate of the proposed GAN approach will increase by sharing more generated channel samples at each learning iteration, but  decrease given more UAVs in the  network. 
The results also show that the proposed GAN method yields a higher learning accuracy, compared with a standalone GAN, and 
improves the average rate for UAV downlink communications by over $10\%$, compared with a baseline real-time channel estimation scheme. 
 
\end{abstract}

\IEEEpeerreviewmaketitle

\section{Introduction}

Millimeter wave (mmWave) frequencies are a  pillar of next-generation communication systems, and they will support a variety of new applications, such as ultra-high-speed low-latency communications and airborne wireless networks.   
In order to overcome the fast attenuation of mmWave signals, multiple-input multiple-output (MIMO) technologies are often used so as to increase the cell throughput and reduce the multi-user interference. 
Compared with the sub-6 GHz spectrum, the higher frequency of mmWave yields a shorter coherence time for the wireless channels. 
Therefore, mmWave communication links are more time-sensitive and require frequent channel measurements. 
However,  real-time estimation of mmWave MIMO channels can  cause a heavy communication overhead \cite{dong2019deep}. 
In order to  improve the transmission efficiency, it is essential to characterize the mmWave wireless link and accurately model its underlying MIMO channels, thus enabling realistic assessment and effective deployment of mmWave wireless networks.   

Compared with a terrestrial communication network,  mmWave channel modeling  for an airborne, drone-based wireless cellular system is more challenging, due to the mobile location of an aerial base station, and the limited studies on the air-to-ground (A2G) channel characteristics \cite{dabiri2020analytical}.  
Traditional channel modeling methods, such as ray-tracing, becomes very difficult and time-consuming to measure A2G channels, and the generated model cannot be flexibly generalized into other communication environments \cite{yang2019generative}.   
In order to address this challenge, an unmanned aerial vehicle (UAV) base station can collect the A2G channel information during its wireless service, and, then, build a  stochastic  model to estimate the long-term channel parameters. 
The A2G channel model enables the UAV base station to estimate the mmWave link state, thus saving pilot training time and transmit power for efficient communications.  
Therefore,  wireless channel modeling is essential to support a scalable deployment of  mmWave UAV wireless  networks.  

In order to capture the stochastic characteristics of mmWave channels from measurement results,  a number of data-driven modeling approaches were developed in \cite{dong2019deep} and  \cite{akdeniz2014millimeter,han2016two,alkhateeb2019deepmimo,elbir2020federated,park2020communication}.  
Traditional methods, such as spatial-temporal correlation \cite{akdeniz2014millimeter} and compressed sensing \cite{han2016two}, were investigated for characterizing mmWave MIMO transmissions. 
Recent works in \cite{dong2019deep, elbir2020federated} and  \cite{alkhateeb2019deepmimo} developed machine learning methods to extract the propagation feature of mmWave-based communication links. 
The authors in \cite{dong2019deep}  applied deep learning tools for estimating MIMO channels over mmWave frequencies. 
A deep learning dataset  is introduced in \cite{alkhateeb2019deepmimo} for the performance evaluation of mmWave MIMO transmissions. 
However, all of the proposed modeling frameworks  in \cite{dong2019deep} and  \cite{akdeniz2014millimeter,han2016two,alkhateeb2019deepmimo}  depend on the local dataset of a single channel learner, and, thus, the generated channel model is constrained by a limited amount of channel samples and  a few dedicated measurement environments. 
In order to extend the channel model to large-scale application scenarios, a cooperative approach with distributed channel datasets was proposed in \cite{elbir2020federated} to train the channel model   using a federated learning (FL) framework.  
However, the centralized network structure of the FL framework requires a global controller for information aggregation,  and, thus, it cannot operate in a fully distributed network. 
Meanwhile, the FL-based discriminative approach to channel modeling in \cite{elbir2020federated} requires pilot signals to model the accurate channel state information (CSI).   
Furthermore, the work in \cite{park2020communication} characterizes time-varying channel models, by continuously exchanging data in a distributed wireless system.  
However, sharing the raw cellular data in a real-time manner yields a heavy communication overhead, and violates the privacy of mobile users by revealing their location-time information  to other unauthorized entities.  

The main contribution of this paper is a novel framework that can perform  data collection and  channel modeling for mmWave communications in a distributed UAV  network.  
First, an effective channel measurement approach is developed to collect the real-time channel information allowing each UAV to train  a local model  via a generative adversarial network (GAN).   
Next,   to expand the application scenarios of the trained mmWave channel model into a broader spatial-temporal domain, a cooperative learning framework, based on the distributed framework of brainstorming GANs \cite{ferdowsi2020brainstorming}, is developed for each UAV to learn the channel distribution from other agents in a fully distributed manner. 
This generative approach allows us to characterize an underlying  distribution of the mmWave channels based on the entire spatial-temporal domain of measured channel dataset. 
Meanwhile,  to avoid revealing the real measured data or the trained channel model to other agents, each UAV shares synthetic channel samples that are generated from its local  channel model in each  iteration.   
We derive the necessary and sufficient conditions for the optimal network structure  that maximizes the learning rate for information sharing in the distributed network.   
Simulation results show that the learning rate of the proposed GAN approach will increase by sharing more generated channel samples in each  iteration, but it decreases for larger networks.   
The results also show that the proposed GAN approach yields a higher learning accuracy, compared with a standalone GAN without information sharing, and  improves the average data rate of UAV downlink communications by over $10\%$, compared with a baseline real-time channel estimation scheme.

The rest of this paper is organized as follows. 
Section \ref{sysModel} presents the communication model and data collection. 
The UAV network, learning framework, and problem formulation are presented in Section \ref{channelModeling}. 
The optimal network structure and learning solutions are derived in Section \ref{solution}. 
Simulation results are shown in Section \ref{simulation}. 
Conclusions are drawn in Section \ref{conclusion}.

\section{Communication Model and Data Collection }\label{sysModel}


\subsection{Millimeter Wave Channel Model}
Consider an aerial cellular network, in which a set of UAVs provide mmWave downlink communications   to  ground user equipment (UE). 
Each UAV and each UE will be equipped with $M$ and $N$ antennas, respectively. 
The MIMO channel matrix $\boldsymbol{H} \in \mathcal{C}^{N\times M}$  can be given by
$	\boldsymbol{H} = \sum_{k=1}^{K} \alpha_k \boldsymbol{a}_r(\phi_k^r) \boldsymbol{a}_t^H (\phi_k^t)$,  
where  $(\cdot)^H$ is conjugate transpose,  $K$ is the number of  paths, $\alpha_k$ is the  complex  gain of path $k$, and  $\boldsymbol{a}_t(\phi_k^t)$ $ \in \mathcal{C}^{M\times 1} $ and $\boldsymbol{a}_r(\phi_k^r)$ $ \in $  $\mathcal{C}^{N\times 1}$ are the transmit steering vector of angle of departure $\phi_k^t$ and receive  vector of angle of arrival $\phi_k^r$, respectively.   
We assume  uniform linear antenna arrays \cite{han2016two}, 
with the steering vectors given by   
$\boldsymbol{a}_t(\phi^t)$ $=$ $[1, $ $e^{j\frac{\pi}{\lambda} \sin(\phi^t)}, \cdots, e^{j(M-1)\frac{ \pi}{\lambda} \sin(\phi^t)} ]^T $  
and $\boldsymbol{a}_r(\phi^r)$ $ = $ $[1, $ $e^{j\frac{ \pi}{\lambda} \sin(\phi^r)}, $ $ \cdots, $ $e^{j(N-1)\frac{ \pi}{\lambda}\sin(\phi^r)} ]^T $, where $\lambda$ is the carrier wavelength.

Given the fact that the A2G channel via mmWave frequencies has very few scattering paths, we assume that $K=1$ for a line-of-sight (LOS) scenario, i.e. each LOS A2G channel consists of a single path that directly connects the UAV and the UE, while in an non-line-of-sight (NLOS) state scenario, the number of paths is zero.   
Then, for a UAV located at coordinates  $\boldsymbol{x}$ and a UE located at coordinates $\boldsymbol{y}$, the A2G channel model at the service time $t$ can be rewritten as
$\boldsymbol{H}(\boldsymbol{x},\boldsymbol{y},t) =  \alpha (\boldsymbol{x},\boldsymbol{y},t) \boldsymbol{a}_r(\boldsymbol{x},\boldsymbol{y}) \boldsymbol{a}_t^H (\boldsymbol{x},\boldsymbol{y})$,    
where $|\alpha_{\textrm{NLOS}}|=0$ and $|\alpha_{\textrm{LOS}}|>0$. 
Here, the values of $\boldsymbol{a}_r(\boldsymbol{x},\boldsymbol{y})$ and $\boldsymbol{a}_t (\boldsymbol{x},\boldsymbol{y})$ will be uniquely determined by the locations of the UAV-UE pair.  
Then, the estimation of the channel matrix can be obtained by determining the parameter  $\alpha$, in terms of the transmitter's and receiver's locations, as well as the service time. 

\subsection{Channel Estimation and Data Collection}

In order to estimate the A2G mmWave channel, each UAV transmits a pilot symbol with signal power $P$. Let $\boldsymbol{w}$ and $\boldsymbol{q}$ be the beamforming and combining vectors for  channel estimation, respectively. Then, the received pilot signal at the UE is 
\begin{equation}\label{receivedPilot}
	r = \sqrt{P} \boldsymbol{q}^H \boldsymbol{H} \boldsymbol{w} + \boldsymbol{q}^H \boldsymbol{n},
\end{equation}
where  $\boldsymbol{n} \sim \mathcal{CN}(\boldsymbol{0},\sigma^2_{\textrm{UE}}\boldsymbol{I}_N)$ is the noise vector. 
Let $\otimes$ be the Kronecker product, and $\textrm{vec}(\cdot)$ be the  vectorization of a matrix. 
Then, the received pilot signal in (\ref{receivedPilot}) can be rewritten as
\begin{equation}
\begin{aligned}
r &= \sqrt{P} (\boldsymbol{w}^T \otimes \boldsymbol{q}^H) \textrm{vec}(\boldsymbol{H})  + \boldsymbol{q}^H \boldsymbol{n}, \\
&= \sqrt{P} (\boldsymbol{w}^T \otimes \boldsymbol{q}^H) (\boldsymbol{a}_t^{*}  \otimes \boldsymbol{a}_r) \alpha(\boldsymbol{x},\boldsymbol{y},t)  + \boldsymbol{q}^H \boldsymbol{n} , \\
&= \beta \alpha(\boldsymbol{x},\boldsymbol{y},t)  + \boldsymbol{q}^H \boldsymbol{n},  
\end{aligned}
\end{equation} 
where $(\cdot)^{T}$ is transpose, $(\cdot)^{*}$ is complex conjugate,  and $\beta = \sqrt{P} (\boldsymbol{w}^T \otimes \boldsymbol{q}^H) (\boldsymbol{a}_t^{*}  \otimes \boldsymbol{a}_r) \in \mathcal{C}$. 
After receiving $r$, each UE will send the pilot training information to the UAV via  a sub-$6$ GHz uplink \cite{semiari2019integrated}.   
Note that, the beamforming and combining vectors are known by the BS for training purpose.   
Therefore, based on the  pilot signal and  location information, the UAV located at $\boldsymbol{x}$ can estimate the downlink channel gain towards a UE located $\boldsymbol{y}$ at time $t$ via
\begin{equation}
\begin{aligned}
\tilde{\alpha}(\boldsymbol{x},\boldsymbol{y},t) = r \beta^{-1}  =  {\alpha}(\boldsymbol{x},\boldsymbol{y},t) + \tilde{n},
\end{aligned}
\end{equation}  
where $\tilde{n} =  \boldsymbol{q}^H \boldsymbol{n}\beta^{-1} $ is the uncorrelated estimation error.     
During the aerial cellular service, the channel gain  $\tilde{\alpha}$ can be measured and collected by each UAV over a spatial-temporal domain. 
We denote the channel dataset of a given UAV $i$ as $\mathcal{S}_i = \{\boldsymbol{s}_n\}_{n = 1,\cdots,S_i}=\{\boldsymbol{x}_n,\boldsymbol{y}_n,t_n, \tilde{\alpha}_n \}_{n = 1,\cdots,S_i}$, where $ S_i = |\mathcal{S}_i| $ is the number of data samples. 
Based on $\mathcal{S}_i$, each UAV $i$ can build its own model for estimating  A2G mmWave channels in its dedicated measurement area. 
However, over a large spatial-temporal domain, it is very challenging to develop a stochastic model that properly  captures the amplitude and phase coefficients of the MIMO channel response, due to distinct communication environments and a large span of channel parameter values.   
To address this challenge, we  will introduce a deep learning approach to enable   accurate A2G channel modeling over mmWave spectrum.

\section{Channel Modeling via Distributed GANs}\label{channelModeling}

Given the channel dataset $\mathcal{S}_i$, each UAV $i$ can train its own channel model, based on a deep neural network (DNN), to characterize the underlying distribution  $(\boldsymbol{x},\boldsymbol{y},t,\alpha)\sim f_i$. 
This channel distribution enables each UAV to estimate its mmWave link gain 
$\alpha$, while identifying the spatial-temporal range of $(\boldsymbol{x},\boldsymbol{y},t)$ that defines the applicable domain of the trained channel model. 
However, in practice,  each UAV only has a limited amount of channel data samples.  
Thus, a mmWave channel model that is trained based on a local dataset, can be biased and only feasible for a limited spatial-temporal domain. 
Once the UAV moves to an unvisited area,   pilot measurement will again be necessary in order to acquire the propagation feature of the new environment  and update the channel model.  
However, both data collection and model update processes are time-demanding and energy-consuming for a UAV platform.  
Therefore,  to avoid repeated channel estimations within the same area, a UAV can learn the channel information from other UAVs that operated in this region. 
However, raw data exchange in a distributed manner can yield a heavy communication overhead, and it may raise privacy concerns by sharing the location-time information  of mobile UEs to an unauthorized UAV, especially when each UAV  belongs to a different network operator.

\subsection{Distributed GAN Framework: Preliminaries} 
In order to share channel data in a communication-efficient and privacy-preserving approach within the UAV network, a distributed GAN framework is proposed to cooperatively model the A2G mmWave  channel.  
The GAN framework trains a model to generate channel samples from an underlying  distribution of its dataset, without explicitly revealing the data distribution or showing real data samples.   
Given a set $\mathcal{I}$ of $I$ UAVs, we consider that the available data in $ \mathcal{S} =\mathcal{S}_1 \cup \cdots \cup \mathcal{S}_I$  follows a distribution $f$. 
The local dataset $\mathcal{S}_i$ of  each UAV $i$  is collected from different geographic areas or at different service times. 
Hence, each local dataset $\mathcal{S}_i$ follows a distribution $f_i$ that does not span the entire  space of the real channel distribution. 

In a GAN framework, each UAV $i$ has a generator $G_i$, a discriminator $D_i$ and a local dataset $\mathcal{S}_i$.  
The generator $G_i(\boldsymbol{z},\boldsymbol{\theta}^g_{i})$ is a DNN with a parameter vector $\boldsymbol{\theta}^g_i$, which maps  a random input $\boldsymbol{z}$ to the channel sample space  $ \mathcal{S}$, and the discriminator $D_i(\boldsymbol{s}, \boldsymbol{\theta}^d_i)$ is another DNN with a parameter vector $\boldsymbol{\theta}^d_i$ that takes a channel sample  $\boldsymbol{s}$ as an input and outputs a value between $0$ and $1$. 
If the output of $D_i$ is close to $1$, then the input sample $\boldsymbol{s}$ is similar to the real data sample in $\mathcal{S}_i$; otherwise, a zero output of   $D_i$ means that the input data is fake. 
Therefore, the generator of each UAV $i$ aims to generate channel samples close to the real measurement data, while the discriminator tries to distinguish the fake  samples from the real channel samples.

\begin{figure}[!t]
	\begin{center}
		\vspace{-0.12cm}
		\includegraphics[width=9cm]{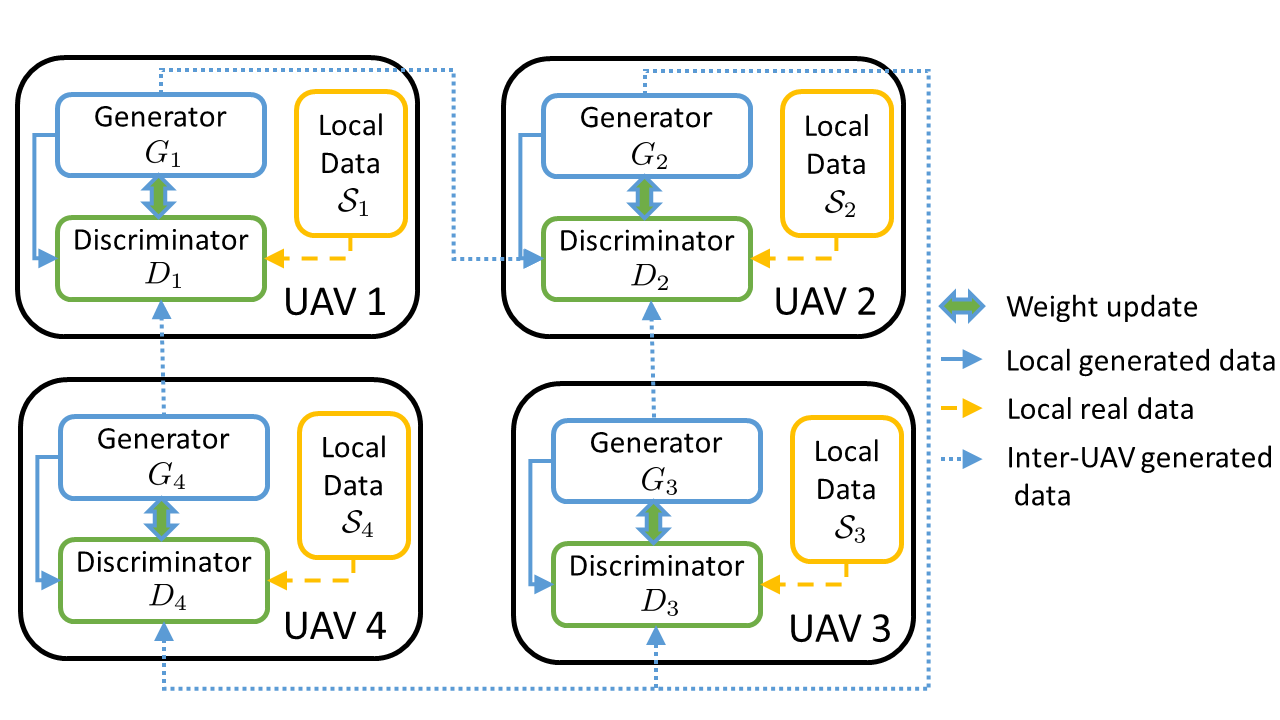}
		\vspace{-0.5cm}
		\caption{\label{learningFramework}\small An illustration of the distributed GAN framework with four UAVs, where $\mathcal{O}_1 = \{2\}$, $\mathcal{O}_2 = \{3,4\}$, $\mathcal{O}_3 = \{2\}$, and $\mathcal{O}_4 =\{1\}$. }  
	\end{center}\vspace{-0.8cm}  
\end{figure}

The goal is to train the generator distribution $f^g_{i}$ of each UAV $i$  to find the entire channel distribution $f$, under the constraint that no UAV $i$ sends its real dataset $\mathcal{S}_i$ or its DNN parameters $\boldsymbol{\theta}^g_i$ and $\boldsymbol{\theta}^d_i$ to  other UAVs. 
Instead, as shown in  \cite{ferdowsi2020brainstorming}, each UAV $i$ only shares the generated samples from $G_i$ in each training iteration.
Fig. \ref{learningFramework} illustrated the proposed distributed GAN framework, where the input of the discriminator for each UAV $i$ consists of the real samples from the local dataset $\mathcal{S}_i$, and the generated samples from the local generator $G_i$ and the generators of  its neighboring UAVs. 
In the distributed GAN framework  \cite{ferdowsi2020brainstorming}, the generators collaboratively generate channel samples to fool all of the discriminators while the discriminators try to distinguish between the generated and real channel samples.  
Let $\mathcal{N}_i$ be the set of  UAVs from whom UAV $i$ receives generated samples, and
let $\mathcal{O}_i$ be the UAV set to whom UAV $i$ sends its generated  samples. 
Then, for each UAV $i$ , the interaction between its generator and discriminator can be modeled by a game-theoretic framework with a value function: 
\begin{equation} \label{indU} \vspace{-0.1cm}
\begin{aligned}
V_i(D_i, G_i, \{ G_j \}_{j \in \mathcal{N}_i}) = &\mathbb{E}_{\boldsymbol{s}\sim f_i^b }[\log D_i(\boldsymbol{s})] +\\
& \mathbb{E}_{\boldsymbol{z} \sim f^z_i}[\log(1-D_i(G_i(\boldsymbol{z})))],
\end{aligned}
\end{equation}
where $f^b_i$ is a mixture distribution of UAV $i$'s  local dataset $\mathcal{S}_i$   and received data from all neighboring UAVs in $\mathcal{N}_i$, and $f_i^z$ is the sampling distribution of the random input $\boldsymbol{z}$.   
Here, we define $f^b_i = \pi_i f_i +  \sum_{j \in \mathcal{N}_i} \pi^g_{ij}f^g_j$, 
where $\pi_i =  S_i/(S_i + \eta \sum_{j \in \mathcal{N}_i}S_j)$,  $\pi^g_{ij}=  {\eta S_j}/{(S_i + \eta \sum_{j \in \mathcal{N}_i}S_j )}$, and $\eta {S}_j$ is the number of generated samples that  UAV $j$ transmits to UAV $i$ in each iteration, with $\eta >0$. 
Thus, the first term in (\ref{indU}) forces the discriminator  to output one for local real data and channel samples from other UAVs, and the second term penalizes generated data samples created by the local generator. 
Therefore, the generator of each UAV  aims to minimizing the value function, while the discriminator tries to maximize this value. Thus, the local training within each UAV between its generator and  discriminator forms a zero-sum game, and the total utility function of the  distributed GAN  network is  \cite{ferdowsi2020brainstorming}
\begin{equation}\label{Utility}\vspace{-0.1cm}
	V( \{D_i\}_{i=1}^{I},\{G_i\}^I_{i=1}) = \sum_{i=1}^I V_i(D_i, G_i, \{ G_j \}_{j \in \mathcal{N}_i}), 
\end{equation}
where all generators aim at minimizing the total utility function defined in (\ref{Utility}), while all discriminators try to maximize this value. 
Therefore, based on \cite{ferdowsi2020brainstorming}, the optimal discriminators and generators can be derived as a min-max problem as follows:
\begin{equation}\label{optimalDG}\vspace{-0.1cm}
	\{D^{*}_i\}_{i=1}^{I},\{G^{*}_i\}^I_{i=1} = \arg \min_{G_1,\cdots,G_I} \arg \max_{D_1,\cdots,D_I} V.  
\end{equation} 
Note that,  the optimal discriminators and generators in (\ref{optimalDG}) for the distributed GAN learning 
depend on the structure of the UAV communication system, which is defined as next.

\subsection{UAV Communication Network}

The communication structure of the UAV network is denoted by a directed graph $\mathcal{G} = (\mathcal{I},\mathcal{E})$, where $\mathcal{I}$ is the set of  UAVs, and $\mathcal{E}$ is the set of edges. 
Each edge $e_{ij}$ is an ordered UAV pair that corresponds to an air-to-air (A2A) communication link. 
For example, for any $i,j\in\mathcal{I}$, if $ e_{ij}\in \mathcal{E}$, then in each learning iteration, UAV $i$ will send its generated  data to the discriminator of UAV $j$.  
Meanwhile, for any $u,v \in\mathcal{I}$, if we can start from $u$, follow a set of connected  non-repeated edges in  $\mathcal{E}$, and finally reach $v$, then we say that a  path ${E}_{u,v}$ exists from $u$ to $v$, and the length $l_{u,v}$ equals to the number of edges on ${E}_{u,v}$.   

In order to efficiently share the generated channel samples, 
orthogonal frequency-division multiple access (OFDMA) techniques with $I$  available resource blocks (RBs) are used to support the A2A communication  over sub 6-GHz frequencies \cite{semiari2019integrated}. 
In order to avoid communication interference, we require  the number of communication links not to exceed the number of RBs, i.e., $|\mathcal{E}|\le I$, which is reasonable for UAV networks. 
Meanwhile, assuming a fixed hovering location for each UAV during the learning stage, the A2A communication rate from UAV $i$ to  $j$ using RB $b$ is given as 
$ R_{ij} = w_b \log_2(1+ P_{ij}h_{ij}/\sigma^2 )$, 
where $w_b$ is the A2A communication bandwidth, $P_{ij}$ and $h_{ij}$ are  the transmit power and path loss from UAV $i$ to  $j$,  and $\sigma^2$ is the noise power. 
A signal-to-noise ratio (SNR) threshold $\tau$ is introduced, such that for any UAV pair $(i,j)$, if the received  SNR at UAV $j$ is lower than $\tau$, i.e. $P_{ij}h_{ij}/\sigma^2 < \tau$, then, no RB will be assigned to this A2A communication link, i.e., $e_{ij} \notin \mathcal{E}$.  
In each iteration,  
each UAV $i$ sends $\eta {S}_i$ generated samples to its neighbors  $\mathcal{O}_i$, 
and the transmission time for  the generated channel samples  should not exceed  $t_\tau$. 


Here, we define the \emph{convergence time} $C$ of the distributed GAN approach as  the expected number of iterations that is required for the learning process to converge, multiplied by the duration of each learning iteration.    
To facilitate the analysis, we consider  a fixed size for each UAV’s dataset, i.e. $S_1 = \cdots = S_I$, and  a homogeneous UAV communication network, where $N_1= \cdots = N_I = N $. 
Then, the probability that the learning process converges after iteration $T$ can be derived as follows.    
\begin{theorem}\label{prob}  
	 Given the UAV network structure $\mathcal{G}$, the probability $p_{\mathcal{G}}(T)$ that the generator distribution $f^g_i$ of each UAV $i$  covers the entire data distribution $f$ after the $T$-th iteration can be given, based on the maximum shortest-path  $l^{\textrm{max}}$ in $\mathcal{G}$, as
	 \begin{equation}
	 \begin{aligned}
	 &p_{\mathcal{G}}(T) = \mathbbm{1}_{T \ge l^{\textrm{max}}} \quad \frac{\eta^{l^{\textrm{max}}}}{(1+N\eta)^{l^{\textrm{max}}-1}} \quad  +   \\
	 &\mathbbm{1}_{T > l^{\textrm{max}}}  \sum_{i=l^{\textrm{max}}+1}^T \left[  \prod_{j=l^{\textrm{max}}}^{i-1}\left(1- \frac{\eta^{l^{\textrm{max}}}}{(1+N\eta)^{j-1}}  \right)\right]  \frac{ \eta^{l^{\textrm{max}}}}{(1+N\eta)^{i-1}}. 
	 \end{aligned}	 	 
	 \end{equation}
\end{theorem}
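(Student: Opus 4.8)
The plan is to reduce the claim about generator coverage to a hop-by-hop information-propagation process on the graph $\mathcal{G}$, and then to recognize $p_{\mathcal{G}}(T)$ as the cumulative distribution function of a ``first-success'' time.

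First, I would make precise what ``$f^g_i$ covers $f$'' means in graph terms. Since $f$ is the mixture of the local distributions $f_i$ induced by $\mathcal{S}=\mathcal{S}_1\cup\cdots\cup\mathcal{S}_I$, and since \cite{ferdowsi2020brainstorming} guarantees that at a solution of (\ref{optimalDG}) each discriminator input $f^b_i$ and each generator distribution $f^g_i$ coincide with $f$, the event that every $f^g_i$ covers $f$ is equivalent to: for every ordered pair $(u,v)$ the portion of $f_u$ that is unique to UAV $u$'s data has been transmitted---possibly through intermediate UAVs---into $v$'s generator. A component of $f_u$ cannot appear in $f^g_v$ before it has travelled along a directed path from $u$ to $v$, so no generator is complete before iteration $l^{\textrm{max}}$; this gives the indicator $\mathbbm{1}_{T \ge l^{\textrm{max}}}$ and shows it suffices to track the single hardest pair $(u^\star,v^\star)$ realizing $l^{\textrm{max}}$ (under $S_1=\cdots=S_I$ and $N_1=\cdots=N_I=N$, every other pair is dominated by it; strong connectivity of $\mathcal{G}$ is assumed so that $l^{\textrm{max}}$ is finite).

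Next, I would analyze the transient dynamics of the mixtures along a shortest path $u^\star=w_0\to w_1\to\cdots\to w_{l^{\textrm{max}}}=v^\star$. Under the homogeneity assumptions the weights collapse to $\pi_i=1/(1+N\eta)$ and $\pi^g_{ij}=\eta/(1+N\eta)$, so when the novel component of $f_{u^\star}$ first enters $w_k$'s generator its mass is diluted by the factor $\eta/(1+N\eta)$ relative to its mass in $w_{k-1}$'s generator, and it is relayed onward only if it is actually realized among the $\eta S$ samples that UAV generates in that iteration. I would establish that the ``relay all the way to $v^\star$ by iteration $i$'' event succeeds at iteration $i$, conditioned on not having succeeded earlier, with probability
\begin{equation}
q_i \;:=\; \frac{\eta^{l^{\textrm{max}}}}{(1+N\eta)^{i-1}},
\end{equation}
where the factor $\eta^{l^{\textrm{max}}}$ collects the $l^{\textrm{max}}$ per-hop sampling events, and the remaining factor $(1+N\eta)^{-(i-1)}$ is the extra dilution accumulated because, for every iteration before success, the travelling component is further crowded out at the relaying UAVs by the other received samples. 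Since the generation steps at different iterations use fresh randomness, these per-iteration events are conditionally independent.

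Finally, I would assemble the cumulative distribution. The process first succeeds at iteration $l^{\textrm{max}}$ with probability $q_{l^{\textrm{max}}}=\eta^{l^{\textrm{max}}}/(1+N\eta)^{l^{\textrm{max}}-1}$, and, conditioned on not having succeeded through iteration $i-1$, it succeeds at iteration $i>l^{\textrm{max}}$ with probability $q_i$, so the probability of first success exactly at $i$ is $\bigl[\prod_{j=l^{\textrm{max}}}^{i-1}(1-q_j)\bigr]q_i$. Summing these over $i=l^{\textrm{max}},\dots,T$ yields exactly the stated expression for $p_{\mathcal{G}}(T)$. I expect the main obstacle to be the second step: rigorously justifying the exact per-iteration probability $q_i$---establishing the $(1+N\eta)$-per-iteration dilution, cleanly separating the deterministic mass recursion of the mixtures from the stochastic finite-sample generation, and verifying the conditional independence across iterations. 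Confirming the boundary cases ($T<l^{\textrm{max}}$ giving probability $0$, and the expression defining a valid probability) and checking that the pair attaining $l^{\textrm{max}}$ is indeed the bottleneck under homogeneity should be comparatively routine.
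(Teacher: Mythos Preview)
The paper does not actually prove Theorem~\ref{prob}; it defers the argument entirely to the companion reference \cite{zhang2021distributed}. Consequently there is no in-paper proof to compare your proposal against. What can be said is that your reconstruction is consistent with the structure of the stated formula: you correctly recognize $p_{\mathcal{G}}(T)$ as the cumulative distribution of a first-success time with iteration-dependent conditional success probability $q_i=\eta^{l^{\textrm{max}}}/(1+N\eta)^{i-1}$, and you correctly use the homogeneity assumptions to reduce the mixture weights to $\pi_i=1/(1+N\eta)$ and $\pi^g_{ij}=\eta/(1+N\eta)$, which is exactly the simplification the paper invokes just before stating the theorem.

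The one place where your sketch is genuinely incomplete---and you flag it yourself---is the derivation of $q_i$. Your heuristic splits it into a factor $\eta^{l^{\textrm{max}}}$ for ``$l^{\textrm{max}}$ per-hop sampling events'' and a factor $(1+N\eta)^{-(i-1)}$ for ``accumulated dilution,'' but as written these two mechanisms are not clearly disentangled: the per-hop dilution $\eta/(1+N\eta)$ already mixes a sampling contribution ($\eta$) with a crowding contribution ($1+N\eta$), so double-counting is a real risk, and it is not obvious why the exponent on $(1+N\eta)$ should be $i-1$ rather than, say, $i$ or $i-l^{\textrm{max}}$. A rigorous version would need an explicit recursion for the mass of the $f_{u^\star}$ component inside $f^g_{w_k}$ as a function of both the hop index $k$ and the iteration count, together with a precise finite-sample argument linking that mass to the probability of successful relay. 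Absent access to \cite{zhang2021distributed}, your outline is a reasonable plan, but that step is where the real work lies.
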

\begin{proof}
	Proof is available in \cite{zhang2021distributed}.
\end{proof}
Theorem \ref{prob} shows that the convergence  iteration  is greater than or equal to the maximum shortest-path length $l^{\textrm{max}}$. 
This implies that to optimize the convergence rate for data sharing and channel modeling  in the UAV  network, it is necessary to minimize the maximum length of shortest paths among all UAVs. 
Then, based on Theorem \ref{prob}, the convergence  iteration $T_\mathcal{G} \in \mathbb{N}^+$  with a confidence level $p_\tau \in(0,1)$ is given by 
\begin{equation} 
p_{\mathcal{G}}(T_\mathcal{G}-1)< p_\tau \le    p_{\mathcal{G}}(T_\mathcal{G}). 
\end{equation}
That is, after  $T_\mathcal{G}$  iterations,  the generator distribution of each UAV is guaranteed to cover the entire channel distribution with a probability  $p_\tau$.    
Meanwhile, we assume  the local adversarial training between the generator and discriminator within each UAV to be perfect with a constant time cost $t_{c}$. 
Then, given the network structure $\mathcal{G}$, the overall convergence time of the distributed GAN learning is 
$C(\mathcal{G}) = (t_\tau + t_c) \cdot T_\mathcal{G} $. 

Consequently, in the distributed UAV network with limited communication resources, the objective for the cooperative mmWave channel modeling is to form an optimal A2A communication network $\mathcal{G}$, such that the expected convergence time of the distributed GAN learning is minimized, i.e., 
\begin{subequations}\label{equsOpt}  
	\begin{align}
	\min_{\mathcal{G}} \quad &  C(\mathcal{G})  \label{equOpt}\\ 
	\textrm{s. t.} \quad  
	& \sum_{e_{ij} \in \mathcal{E}} P_{ij} \le P_\textrm{max},  \quad \forall i \in \mathcal{I},  \label{consPower}\\ 
	& P_{ij}h_{ij}/\sigma^2 \ge \tau, \quad \forall e_{ij} \in \mathcal{E}, \label{consSNR}\\
	& \eta S_i/R_{ij}  \le t_\tau,  \quad \forall e_{ij} \in \mathcal{E},  \label{consTime} \\
	& \exists E_{i,j}  \subset \mathcal{E}  , \quad \forall i,j \in \mathcal{I}, \label{consPath} \\
	& |\mathcal{E}| \le I.  \label{consEdge} 	
	\end{align}
\end{subequations}
Here, (\ref{consPower}) limits the maximum transmit power  $P_\textrm{max}$  for each UAV,   (\ref{consSNR}) and (\ref{consTime}) set thresholds for the received SNR and the transmission time of each A2A communication link,  (\ref{consPath}) requires a strongly connected network in $\mathcal{G}$ such that each local channel dataset can be learned by all the other UAVs via the distributed GAN framework, and (\ref{consEdge}) avoids the interference over A2A communication links. 
Note that, in order to solve (\ref{equsOpt}),  a central controller is required to optimize the  communication structure based on the path loss information between each UAV pair. 
However, in the distributed UAV network,  such a centralized entity is often not available, which makes (\ref{equsOpt}) very challenging to solve.

\section{Optimal learning for distributed GANs } \label{solution}

\subsection{Optimal network structure for A2A UAV communications }
In order to optimally solve (\ref{equsOpt}) in a distributed manner without a central controller, we derive the graphic  property for the UAV network structure, 
based on (\ref{consPath}) and (\ref{consEdge}), 
 as follows. 
\begin{theorem}\label{gStructure}
	Under the constraint that the number of communication edges is smaller than or equal to the number of UAVs, the strongly connected network must have a ring structure, i.e.,   $N_i = O_i = 1$, $\mathcal{N}_i \cap \mathcal{N}_j = \emptyset$, and $\mathcal{O}_i \cap \mathcal{O}_j = \emptyset$,   $\forall i,j \in \mathcal{I}$.
\end{theorem}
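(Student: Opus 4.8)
The plan is to argue purely from the two graph-theoretic constraints (\ref{consPath}) and (\ref{consEdge}): strong connectivity of $\mathcal{G}=(\mathcal{I},\mathcal{E})$ together with $|\mathcal{E}|\le I$. Assume $I\ge 2$ (the case $I=1$ being degenerate). First I would establish a matching lower bound on $|\mathcal{E}|$ by a degree count. Since $\mathcal{G}$ is strongly connected, every UAV $i$ must have at least one outgoing edge (otherwise no UAV is reachable from $i$) and at least one incoming edge (otherwise $i$ is reachable from no UAV); hence $O_i=|\mathcal{O}_i|\ge 1$ and $N_i=|\mathcal{N}_i|\ge 1$ for all $i\in\mathcal{I}$. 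Summing over all UAVs, $|\mathcal{E}|=\sum_{i\in\mathcal{I}}O_i\ge I$, and combining with (\ref{consEdge}) gives $|\mathcal{E}|=I$. Therefore every term in the sum is tight, i.e., $O_i=1$ for all $i$; and similarly $N_i=1$ for all $i$, because $\sum_{i\in\mathcal{I}}N_i=|\mathcal{E}|=I$ with each $N_i\ge 1$. This yields the degree part $N_i=O_i=1$.

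Next I would show that a strongly connected digraph in which every vertex has out-degree exactly $1$ and in-degree exactly $1$ is a single directed Hamiltonian cycle. Pick any UAV $i_0$ and follow the unique outgoing edge at each step, producing a walk $i_0\to i_1\to i_2\to\cdots$. Since $\mathcal{I}$ is finite, some vertex repeats; let $i_a=i_b$ with $a<b$ be the first repetition, so that $i_a\to i_{a+1}\to\cdots\to i_b=i_a$ is a directed cycle $\mathcal{C}$ on the vertex set $\mathcal{V}_{\mathcal{C}}=\{i_a,\dots,i_{b-1}\}$. Now invoke $N_i=1$: each vertex of $\mathcal{C}$ already has its unique incoming edge inside $\mathcal{C}$, so no edge of $\mathcal{E}$ enters $\mathcal{V}_{\mathcal{C}}$ from $\mathcal{I}\setminus\mathcal{V}_{\mathcal{C}}$. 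If $\mathcal{V}_{\mathcal{C}}\ne\mathcal{I}$, take $v\notin\mathcal{V}_{\mathcal{C}}$ and $u\in\mathcal{V}_{\mathcal{C}}$; by strong connectivity a directed path from $v$ to $u$ exists, and it must cross from outside $\mathcal{V}_{\mathcal{C}}$ into $\mathcal{V}_{\mathcal{C}}$, contradicting the previous sentence. Hence $\mathcal{V}_{\mathcal{C}}=\mathcal{I}$ and $\mathcal{G}$ is exactly the ring $i_a\to i_{a+1}\to\cdots\to i_a$ through all $I$ UAVs.

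Finally, the disjointness is immediate from the ring structure. On this ring each $\mathcal{O}_i$ is a singleton $\{\sigma(i)\}$, where $\sigma:\mathcal{I}\to\mathcal{I}$ is the cyclic-successor map, which is a bijection; hence $i\ne j$ implies $\sigma(i)\ne\sigma(j)$, so $\mathcal{O}_i\cap\mathcal{O}_j=\emptyset$. Symmetrically each $\mathcal{N}_i=\{\sigma^{-1}(i)\}$ and $\mathcal{N}_i\cap\mathcal{N}_j=\emptyset$ for $i\ne j$, which establishes all the claimed properties. The one step I would write out most carefully — and the only real obstacle — is the passage from ``in-degree and out-degree everywhere equal to $1$'' plus strong connectivity to ``a single \emph{spanning} cycle'': the in-degree-one condition is precisely what rules out a disjoint union of several cycles (no cycle can be fed from outside), while strong connectivity is what forbids that cycle from covering only a proper subset of the UAVs; everything else is routine counting.
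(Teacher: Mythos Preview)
Your argument is correct. The degree-count step pinning $|\mathcal{E}|=I$ and hence $N_i=O_i=1$ is the right opening, and the second step --- following the unique out-edges to locate a cycle, then using $N_i=1$ to forbid any edge entering that cycle from outside, then invoking strong connectivity to force the cycle to be spanning --- is exactly the standard way to conclude that a strongly connected $1$-regular digraph is a single directed Hamiltonian cycle. The disjointness claims then follow trivially, as you note.

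As for comparison: the paper does not actually prove Theorem~\ref{gStructure} here; it only cites the extended version \cite{zhang2021distributed} for the argument. So there is no in-paper proof to weigh your approach against. That said, your proof is the natural and essentially unique elementary route to this fact, and it is hard to imagine the referenced proof differing in any substantive way. One cosmetic remark: in your cycle-extraction step you could observe directly that the first repetition must occur at $a=0$ (otherwise $i_{a-1}$ and $i_{b-1}$ would be two distinct in-neighbors of $i_a$, contradicting $N_{i_a}=1$), which slightly streamlines the write-up, but this is not needed for correctness.
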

\begin{proof}
	Proof is available in \cite{zhang2021distributed}.
\end{proof}
Theorem \ref{gStructure} shows that, given constraints (\ref{consPath}) and (\ref{consEdge}), the network structure of the UAV communication system must be a ring, where each UAV receives the channel sample from one UAV, and sends its generated data to another UAV.  

Based on Theorems \ref{prob} and \ref{gStructure}, we can equivalently reformulate (\ref{equsOpt}) into a set of distributed optimization problems, such that the objective of each UAV $i$ is to choose the optimal single UAV $ \mathcal{O}_i =  \{ o_i \}$ to whom UAV $i$ sends its generated channel samples, so as to minimize the convergence time over its maximum shortest-path while satisfying constraints (\ref{consPower})-(\ref{consTime}), i.e.,
\begin{subequations}\label{equsOpt2}  
	\begin{align}
	\min_{o_i \in \mathcal{I}_{-i}} \quad &  l_i^{\textrm{max}} (\mathcal{G} + e_{i,o_i})  \label{equOpt2}\\ 
	\textrm{s. t.} \quad  
	& P_{i,o_i} \le P_\textrm{max},    \label{consPower2}\\ 
	& P_{i,o_i}h_{i,o_i}/\sigma^2 \ge \tau,   \label{consSNR2}\\
	& \eta S_i/R_{i,o_i}  \le t_\tau,  \label{consTime2}  	
	\end{align}
\end{subequations}
where $\mathcal{I}_{-i}$ is the set of UAVs except for  $i$, $\mathcal{G} + e_{i,o_i}$ is the  graph structure generated by adding an edge $e_{i,o_i}$ to $\mathcal{G}$, and $l_i^{\textrm{max}} $ is the  maximum shortest-path  from UAV $i$ to any other UAVs. 
We define the set of feasible UAVs  to whom  UAV $i$ can send its generated channel samples  while satisfying constraints (\ref{consPower2})-(\ref{consTime2})  as 
$\mathcal{J}_i = \{ j\in \mathcal{I}_{-i} | P_{ij} \le P_\textrm{max},   P_{ij} h_{ij}/\sigma^2 \ge \tau, \eta S_i/ R_{ij} \le t_\tau  \}$. 
Then, the necessary condition for a feasible solution to (\ref{equsOpt2}) is  provided next.   
\begin{proposition}[Necessary condition]\label{existence}
	 A feasible topology solution to  (\ref{equsOpt2}) exists,  only if  $\bigcup_{i=1}^{I} \mathcal{J}_i = \mathcal{I}$ and $\forall i,\mathcal{J}_i \ne \emptyset$ hold.  
\end{proposition}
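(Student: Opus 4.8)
The plan is to argue directly from the ring characterization of Theorem~\ref{gStructure}. First I would observe that any network $\mathcal{G}$ that is feasible for the original design problem must satisfy the strong-connectivity requirement (\ref{consPath}) together with the edge-budget requirement (\ref{consEdge}); by Theorem~\ref{gStructure}, such a $\mathcal{G}$ is necessarily a directed ring spanning all $I$ UAVs. Equivalently, the successor assignment $i \mapsto o_i$, where $\mathcal{O}_i = \{o_i\}$, is a cyclic permutation of $\mathcal{I}$, so each UAV has exactly one out-neighbor and exactly one in-neighbor. Moreover, feasibility of the per-link constraints (\ref{consPower2})--(\ref{consTime2}) on every edge $e_{i,o_i}$ of this ring is, by the very definition of $\mathcal{J}_i$, exactly the statement that $o_i \in \mathcal{J}_i$ for every $i \in \mathcal{I}$. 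So a feasible topology amounts to a choice of representatives $o_i \in \mathcal{J}_i$ whose associated successor map is a single spanning cycle.

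From here the two claimed conditions follow immediately. For the first, $\mathcal{J}_i \ne \emptyset$ for every $i$, since the ring gives each UAV $i$ an outgoing edge to $o_i$ and we have just argued $o_i \in \mathcal{J}_i$. For the second, $\bigcup_{i=1}^{I}\mathcal{J}_i = \mathcal{I}$: the inclusion $\bigcup_{i}\mathcal{J}_i \subseteq \mathcal{I}$ is trivial because $\mathcal{J}_i \subseteq \mathcal{I}_{-i} \subseteq \mathcal{I}$; for the reverse inclusion, fix any UAV $j \in \mathcal{I}$ and note that since $i \mapsto o_i$ is a cyclic permutation, hence surjective, there is some UAV $i$ with $o_i = j$, whence $j \in \mathcal{J}_i \subseteq \bigcup_{k}\mathcal{J}_k$; as $j$ was arbitrary, $\mathcal{I} \subseteq \bigcup_{k}\mathcal{J}_k$.

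I expect the only delicate point to be the reduction step rather than the counting: one must verify that a ``feasible topology solution to (\ref{equsOpt2})'' genuinely inherits both the global structural constraints (through Theorems~\ref{prob} and~\ref{gStructure}, which justified the reformulation into (\ref{equsOpt2})) and the local link constraints, so that the ring edges are precisely a system of representatives $o_i \in \mathcal{J}_i$ that simultaneously forms a Hamiltonian cycle on $\mathcal{I}$. Once this is pinned down, the proof collapses to the elementary fact that a cyclic permutation of a finite set is a bijection. I would also add a short remark that the condition is necessary but not sufficient, since an arbitrary choice of representatives $o_i \in \mathcal{J}_i$ need not close up into one spanning ring, which is what motivates the sufficiency analysis that follows.
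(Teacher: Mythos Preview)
Your argument is correct. You reduce feasibility to the existence of a ring via Theorem~\ref{gStructure}, observe that the per-link constraints (\ref{consPower2})--(\ref{consTime2}) force $o_i\in\mathcal{J}_i$ by the definition of $\mathcal{J}_i$, and then read off both claimed conditions from the fact that $i\mapsto o_i$ is a cyclic permutation: nonemptiness of each $\mathcal{J}_i$ from the existence of $o_i$, and $\bigcup_i\mathcal{J}_i=\mathcal{I}$ from surjectivity. This is the natural route given the paper's setup.

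As for comparison: the paper itself does not give the proof inline; it only cites the extended version \cite{zhang2021distributed}. So there is no in-paper argument to contrast with yours. Given that Proposition~\ref{existence} is stated immediately after Theorem~\ref{gStructure} and the definition of $\mathcal{J}_i$, your derivation is exactly the kind of short consequence one expects, and there is little room for a genuinely different approach. Your closing remark that the condition is necessary but not sufficient (because a system of representatives $o_i\in\mathcal{J}_i$ need not assemble into a single Hamiltonian cycle) is apt and dovetails with the paper's subsequent Proposition~\ref{optimalResult}.
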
 
\begin{proof}
	Proof is available in \cite{zhang2021distributed}.
\end{proof} 
Proposition \ref{existence} shows that if the union of feasible sets does not cover all UAVs, then the UAV network cannot form a strongly connected graph and a feasible solution to  (\ref{equsOpt2})  does not exist.      
Based on Proposition \ref{existence} and Theorem \ref{gStructure}, 
we derive the sufficient condition for the optimal network structure  that maximizes the convergence rate for the distributed GAN learning  as follows. 
\begin{proposition}[Sufficient condition]\label{optimalResult}
	Given that $~~\bigcup_{i=1}^{I} \mathcal{J}_i = \mathcal{I}$ and $\mathcal{J}_i \ne \emptyset$ hold for all $i\in\mathcal{I}$,  the optimal UAV network structure is  $\mathcal{G}^{*} = (\mathcal{I},\mathcal{E})$, where $\mathcal{E} \subseteq \{ e_{ij}| i\in \mathcal{I}, j\in \mathcal{J}_i  \}$ and $l_i^{\textrm{max}}(\mathcal{G}^{*}) = I-1$, $\forall i \in \mathcal{I}$.  
\end{proposition}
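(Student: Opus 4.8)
\emph{Proof sketch.} The plan is to let Theorem~\ref{gStructure} reduce the graph optimization to a statement about rings, and then read off the optimal value and the admissible edge set from Theorem~\ref{prob}. First I would observe that any $\mathcal{G}$ that is feasible for (\ref{equsOpt}) satisfies both the strong-connectivity requirement (\ref{consPath}) and the edge budget (\ref{consEdge}), so by Theorem~\ref{gStructure} it is a directed ring on $\mathcal{I}$, i.e. a single Hamiltonian cycle $\pi(1)\to\pi(2)\to\cdots\to\pi(I)\to\pi(1)$ for some cyclic ordering $\pi$. A one-line computation on this ring gives the directed shortest-path lengths out of any UAV $i$: the farthest reachable UAV is the one immediately preceding $i$ in the cyclic order, at distance exactly $I-1$, so $l_i^{\textrm{max}}(\mathcal{G})=I-1$ for every $i$, hence $l^{\textrm{max}}(\mathcal{G})=I-1$, and this holds for \emph{every} feasible $\mathcal{G}$ irrespective of $\pi$.

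Next I would invoke Theorem~\ref{prob}. Under the homogeneity assumptions $S_1=\cdots=S_I$ and $N_1=\cdots=N_I=N$, the expression for $p_{\mathcal{G}}(T)$ depends on $\mathcal{G}$ only through $l^{\textrm{max}}$, so $T_{\mathcal{G}}$ and the convergence time $C(\mathcal{G})=(t_\tau+t_c)\,T_{\mathcal{G}}$ are determined by $l^{\textrm{max}}$ alone; moreover one checks that $p_{\mathcal{G}}(T)$ is non-increasing in $l^{\textrm{max}}$ (each factor $\eta^{l^{\textrm{max}}}/(1+N\eta)^{l^{\textrm{max}}-1}$ and the onset indicator $\mathbbm{1}_{T\ge l^{\textrm{max}}}$ degrade as $l^{\textrm{max}}$ grows, since $\eta/(1+N\eta)<1$), so $C(\mathcal{G})$ is non-decreasing in $l^{\textrm{max}}$. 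Since, by the previous paragraph, every feasible $\mathcal{G}$ is a ring with the common value $l^{\textrm{max}}=I-1$, all feasible graphs attain the same $C$, which is therefore the minimum; hence every feasible ring solves (\ref{equsOpt}). Finally I would characterize the admissible arcs: an arc $e_{ij}$ occurs in a feasible ring precisely when UAV $i$ selects $o_i=j$ and the per-link constraints (\ref{consPower2})-(\ref{consTime2}) are met for that selection, which by definition of $\mathcal{J}_i$ is exactly $j\in\mathcal{J}_i$. Therefore any optimal $\mathcal{G}^{*}=(\mathcal{I},\mathcal{E})$ satisfies $\mathcal{E}\subseteq\{e_{ij}\mid i\in\mathcal{I},\,j\in\mathcal{J}_i\}$ and $l_i^{\textrm{max}}(\mathcal{G}^{*})=I-1$ for all $i$, and conversely any Hamiltonian cycle on $\mathcal{I}$ whose arcs all lie in that set is feasible and hence optimal.

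The step I expect to be the main obstacle is \emph{existence}: showing that the hypotheses $\bigcup_{i=1}^{I}\mathcal{J}_i=\mathcal{I}$ and $\mathcal{J}_i\ne\emptyset$ actually force the feasibility digraph $\{(i,j):j\in\mathcal{J}_i\}$ to contain a Hamiltonian cycle, so that the claimed $\mathcal{G}^{*}$ exists at all. In isolation these conditions only guarantee every vertex out-degree $\ge 1$ and in-degree $\ge 1$, which does not imply Hamiltonicity in general (two disjoint directed $2$-cycles are a counterexample). To handle this I would bring in the geometry of the link model: because $h_{ij}$ is monotone in the UAV-UAV distance, each $\mathcal{J}_i$ is essentially a ``within-radius'' neighborhood, and under channel reciprocity ($h_{ij}=h_{ji}$) the feasibility relation is symmetric, turning the relevant reachability structure into an undirected geometric graph; I would then argue that the coverage condition together with this geometric structure rules out the disconnected configurations and yields a Hamiltonian cycle. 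Alternatively --- and this matches the paper's flow, since Proposition~\ref{existence} already supplies the necessary direction --- one reads ``given that the conditions hold'' as ``in the regime where the feasible set of (\ref{equsOpt}) is nonempty,'' in which case Theorem~\ref{gStructure} immediately supplies a ring and the reasoning of the first two paragraphs finishes the proof.
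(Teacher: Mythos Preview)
The paper defers the actual proof to the extended version \cite{zhang2021distributed}, so there is no in-paper argument to compare against; your route---Theorem~\ref{gStructure} forces a ring, on any directed ring $l_i^{\textrm{max}}=I-1$ for every $i$, and Theorem~\ref{prob} then makes $C(\mathcal{G})$ depend only on $l^{\textrm{max}}$ and hence be constant across feasible rings---is exactly the argument the paper's scaffolding points to, and the computations in your first two paragraphs are correct.

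Your existence concern in the third paragraph is genuine and not a technicality. The stated hypotheses $\bigcup_i\mathcal{J}_i=\mathcal{I}$ and $\mathcal{J}_i\ne\emptyset$ guarantee only that every vertex of the feasibility digraph $\{(i,j):j\in\mathcal{J}_i\}$ has in- and out-degree at least one; your two-disjoint-directed-$2$-cycles example shows this does not force a Hamiltonian cycle, so under the hypotheses as written the claimed $\mathcal{G}^{*}$ need not exist. Neither the geometric/reciprocity argument you sketch nor the greedy edge-pruning procedure in Algorithm~1 closes this gap without further assumptions (indeed, Algorithm~1 itself checks the slightly different condition $\bigcap_i\mathcal{J}_i\ne\emptyset$, which also does not imply Hamiltonicity). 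The interpretation you offer at the end---reading ``given that the conditions hold'' as ``given that the feasible set of (\ref{equsOpt}) is nonempty''---is the one under which your first two paragraphs constitute a complete proof; as a freestanding sufficient condition, the proposition's hypotheses are weaker than what is actually needed, and you are right to flag this.
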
 
\begin{proof}
Proof is available in \cite{zhang2021distributed}.
\end{proof} 

Consequently,  the optimal UAV network   $\mathcal{G}^{*}$ that minimizes the convergence time $C(\mathcal{G}^{*})$ has a ring  structure with a  communication link set  $\mathcal{E} \subseteq \{ e_{ij}| i\in \mathcal{I}, j\in \mathcal{J}_i  \}$.

\subsection{Optimal learning for distributed GANs}

Based on the optimal network structure $\mathcal{G}^{*}$, according to \cite[Proposition 1 and  Theorem 1]{ferdowsi2020brainstorming}, the optimal generator of each UAV $i$ for the distributed GAN learning  is $G_i^{*} \sim {f^{g}_{i}}^* = f^{b}_{i} = (f_i +  f^g_j)/2$ for $e_{ji} \in \mathcal{E}$, and the optimal discriminator is $D_i^{*}$ = ${f^{b}_{i}}/({f^{b}_{i} + {f^{g}_{i}}^*})$ $=$ $0.5$.  
That is, for each UAV $i$, its generator's distribution ${f^{g}_{i}}^*$  equals to the mixture of the channel distribution $f_i$ from its local dataset $\mathcal{S}_i$ and the generator's distribution  $f^g_j$ from neighboring UAV $j$, and, thus, the discriminator cannot distinguish the generated channel samples from the real data. In this case, the learning process in the UAV network converges to a Nash equilibrium (NE), and the generator of each UAV $i$ learns the entire distribution of mmWave channels, i.e., $G_i^{*} \sim f$.   
The  formation approach of the optimal UAV network, as well as the distributed GAN learning algorithm for mmWave channel modeling, is summarized in Algorithm 1.

\begin{algorithm}[t] \small   
	\caption{UAV network formation with distributed GAN learning for mmWave channel modeling} \label{algo}
	\begin{algorithmic}
		\State \textbf{UAV Network Formation:} \\
		1. Each UAV $i$ uses its own RB to measure channel $h_{ij}$ for $j \in \mathcal{I}_{-i}$, \\ 
		\quad  and broadcasts the feasible UAV set $\mathcal{J}_i$;\\
		2. If $\bigcup_{i=1}^{I} \mathcal{J}_i = \mathcal{I}$ and $\bigcap_{i=1}^{I} \mathcal{J}_i \ne \emptyset$, go to step 3; otherwise, the\\
		\quad UAVs need to adjust their locations, and then, go back to step 1; \\  
		3. Start with the network graph where $\mathcal{E} =  \{ e_{ij}| i \in \mathcal{I}, j \in \mathcal{J}_i \} $; \\
		4. \textbf{For} each UAV $i$ with $|\mathcal{J}_i|>1$, \\
		\quad \quad Remove one edge $e_{ij}$ from  $\mathcal{E}$ where $j \in \mathcal{J}_i$, while guaranteeing \\
		\quad \quad $(\bigcup_{k\in\mathcal{I}_{-i}} \mathcal{J}_k) \cup (\mathcal{J}_i-j )= \mathcal{I}$;\\
		\quad \textbf{Until} $|\mathcal{J}_i|=1$ for all $i \in \mathcal{I}$. \\ 
	\\
		\textbf{Distributed GAN learning:} \\
		A. Initialize $G_i$ and $D_i$ for each UAV $i \in \mathcal{I}$; \\
		B. \textbf{Repeat:} Parallel for all  $i \in \mathcal{I}$: \\ 
		\quad a. Sample $\pi_iu$ real channel samples: $\boldsymbol{s}_i^{(1)}, \cdots, \boldsymbol{s}_i^{(\pi_iu)} \sim \mathcal{S}_i$;\\
		\quad b. Generate $u$ channel samples $G_i(\boldsymbol{z}^{(1)}), \cdots, G_i(\boldsymbol{z}^{(u)}) $ from $f^G_i$ \\
		\quad \quad  and $f^z_i$; \\
		\quad c. Send $\pi_{oi} u$ generated data to each UAV $o \in \mathcal{O}_i$,  and receive \\ 
		\quad \quad $\pi_{ij}u$ data samples $\boldsymbol{s}_j^{(1)}, \cdots, \boldsymbol{s}_j^{(\pi_{ij}u)}$ from each UAV $j \in \mathcal{N}_i$; \\
		\quad d. Update $\boldsymbol{\theta}_i^d$ via gradient ascent:  \\
		\quad \quad $\nabla_{\boldsymbol{\theta}_i^d} V(D_i(\boldsymbol{\theta}_i^d)) =\frac{1}{2u} \nabla_{\boldsymbol{\theta}_i^d}[\sum_{k=1}^{\pi_iu} \log(D_i(\boldsymbol{s}_i^{(k)})) +$ \\
		\quad \quad $\sum_{j \in \mathcal{N}_i} \sum_{k=1}^{\pi_{ij}u}\log(D_i(\boldsymbol{s}_j^{(k)}))  + \sum_{k=1}^{u}  \log(1- D_i(G_i(\boldsymbol{z}^{(k)})))] $;\\
		\quad e. Update $\boldsymbol{\theta}_i^g$ via gradient descent: \\
		\quad \quad \quad $\nabla_{\boldsymbol{\theta}_i^g} V(G_i(\boldsymbol{\theta}_i^g)) = \frac{1}{u} \nabla_{\boldsymbol{\theta}_i^g} \sum_{k=1}^{u} \log(1- D_i(G_i(\boldsymbol{z}^{(k)}))) $; \\
		\quad \textbf{Until}  convergence to the NE. 
	\end{algorithmic}
\end{algorithm}

\section{Simulation Results and Analysis}\label{simulation}

For our simulations, we consider an airborne  network with four UAVs that provide wireless service within a geographic area of  $100 \times 100~ \textrm{m}^2$. Each UAV has a mmWave channel dataset \cite{alkhateeb2019deepmimo} that covers one of four regions in the area without overlap,  i.e., business blocks, residential areas, rural region and a city park.  
For simulation parameters, we set $M=256$, $N=64$, $f = 30$ GHz, $w_b = 2$ MHz, $P_{\textrm{max}} = 40$ dBm, $\sigma^2 = -174$ dBm/Hz, $\tau = 10$ dB,  $t_\tau = 0.1$ second, $\eta = 1.4$, and $S_i = 1000$ for each UAV $i$. 
We implement a neural network (NN) with two convolutional layers for the GAN discriminator, and another NN with two transposed convolutional layers   for the generator. 

Fig. \ref{CT} shows the convergence rates of the distributed GAN learning, for different sizes of shared data samples $\eta$ and for different numbers of UAVs, respectively.  
Note that, in each iteration, each UAV $i$ sends $\eta S_i$  generated channel samples to its neighboring UAVs in $\mathcal{O}_i$. 
As shown in the upper plot of Fig. \ref{CT}, when $\eta$ becomes larger,  the convergence rate of our distributed GAN approach becomes faster. 
Given that the maximum path length $l^{\textrm{max}}$ for a four-UAVs network equals to three,  the convergence probabilities  remains to be zero, until  $T$ is  equal to or greater than three. 
Meanwhile,  our proposed algorithm shows a rapid convergence property when $\eta=1.4$, and the distributed GAN learning converges with a probability of over $90\%$ after six iterations.   
Next, we show the relationship between the convergence rate and the number of UAVs in the lower plot of Fig. \ref{CT}, for a fixed generated sample size $\eta = 1.4$.  
For larger network sizes, the convergence rate of the channel modeling process decreases, due to a longer path length in the distributed learning system. 
Therefore, in a large UAV network, the size $\eta$ of the generated channel samples needs to be adaptively adjusted to guarantee an efficient learning. 

\begin{figure}[!t]
	\begin{center}
		\vspace{-0.12cm}
		\includegraphics[width=7.8cm]{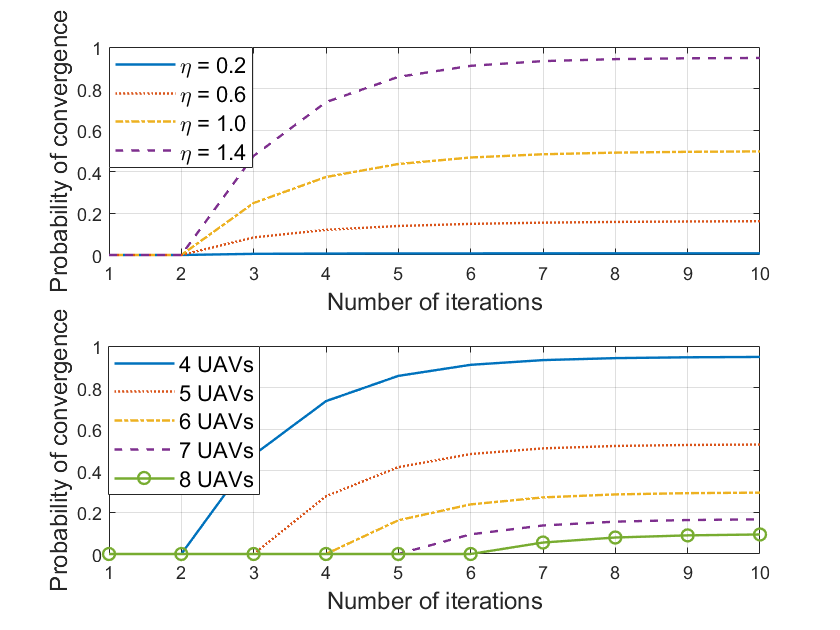}
		\caption{\label{CT}\small Convergence rate for different sizes of shared samples $\eta$ and different network sizes. }  
	\end{center}\vspace{-0.4cm}  
\end{figure}
\begin{figure}[!t]
	\begin{center}
		\vspace{-0.4cm}
		\includegraphics[width=7.8cm]{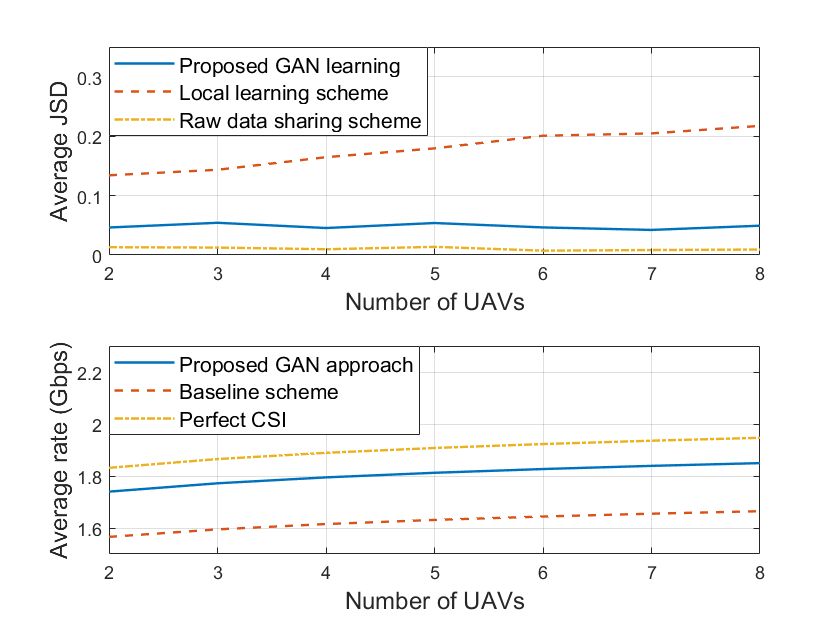}
		\caption{\label{jsd}\small Average JSD($f$,$f^g_i$) of channel modeling and  average data rate of UAV downlink communications for different network sizes. }    
	\end{center}\vspace{-0.6cm}  
\end{figure}

In Fig. \ref{jsd}, we evaluate the channel modeling accuracy and the communication performance of the proposed distributed GAN algorithm. 
First, in the upper plot of Fig. \ref{jsd}, we introduce a baseline scheme that performs local channel modeling without information sharing  and a distributed learning scheme that shares raw channel data between each UAV, and Jensen-Shannon divergence (JSD) is used as the performance metric, where a lower value of JSD indicates a higher learning accuracy.  
Fig. \ref{jsd} first shows that the modeling accuracy of the proposed distributed GAN approach  outperforms the local learning scheme. 
Given more UAVs in the system, each UAV covers a smaller service area, and the local generator distribution only applies to a limited spatial domain. Thus, the modeling accuracy of the local learning scheme decreases for  a larger network size.  
However, using the distributed learning approach, each UAV can learn the A2G channel property over a larger  location domain from the generated samples of other UAVs. Thus, the modeling accuracy for the proposed approach stays the same for different network sizes. 
Moreover, due to a limited training time and the inevitable training error at each UAV, the overall distributed GAN training of the UAV network may converge to a local optimum.  
This explains the  performance gap between the proposed learning scheme and  the raw data sharing scheme.    
In the lower plot of Fig. \ref{jsd}, we evaluate the time-average data rate of the UAV A2G communications with a $50$ MHz bandwidth, using the proposed channel modeling approach and two other schemes:  
A baseline scheme that requires a constant pilot training on mmWave channels, and an upper-bound scheme that assumes a known CSI. 
Fig. \ref{jsd} shows that, given more UAVs in the network, the average data rates of all three schemes increase, due to an averagely smaller service area for each UAV. 
Our proposed method applies the trained channel model for downlink transmissions, thus avoiding constant channel estimation. 
Therefore, compared with the real-time measurement scheme, the proposed method improves the time-average data rate by over $10\%$. 
However, due to the  inevitable training error in the proposed channel model,  our distributed GAN  method yields a lower data rate, compared with a perfect CSI scheme.  

\section{Conclusion}\label{conclusion}

In this paper, we have proposed a novel framework for mmWave channel modeling  in a UAV cellular network.  
Based on the distributed GANs, a cooperative learning framework has been developed for each UAV to learn the mmWave channel distribution from other agents in the privacy-preserving and  distributed manner.  
We have derived the necessary and sufficient conditions for the optimal network structure of information sharing that maximizes the learning rate. 
Simulation results have shown that  the learning rate will increase by sharing more generated samples, but decrease given a larger UAV network size. 
The results also show that the proposed distributed GAN approach yields a higher learning accuracy, compared with a standalone GAN, and it improves the average data rate of UAV downlink communications by over $10\%$, compared with the baseline real-time channel estimation scheme. 

\bibliographystyle{IEEEtran}
\bibliography{references}

\end{document}